\newcommand{\Z}{\mathbb{Z}}
\newcommand{\C}{\mathbb{C}}
\newcommand{\bu}{\bm{u}}
\newcommand{\bv}{\bm{v}}
\newtheorem{theorem}{Theorem}[section] 
\newtheorem{lemma}[theorem]{Lemma}     
\newtheorem{definition}[theorem]{Definition}
\title{Generalised Elliptic Functions}
\author{M.England and C.Athorne}
\date{October 2011}
\begin{document}

\maketitle

\begin{abstract}
We consider multiply periodic functions, sometimes called
Abelian functions, defined with respect to the period matrices
associated with classes of algebraic curves.  We realise them as
generalisations of the Weierstra{\ss} $\wp$-function using two
different approaches.  These functions arise naturally as solutions
to some of the important equations of mathematical physics and their
differential equations, addition formulae, and applications have all
been recent topics of study.

The first approach discussed sees the functions defined as
logarithmic derivatives of the $\sigma$-function, a modified Riemann
$\theta$-function.  We can make use of known properties of the sigma
function to derive power series expansions and in turn the
properties mentioned above.  This approach has been extended to a
wide range of non hyperelliptic and higher genus curves and an
overview of recent results is given.

The second approach defines the functions algebraically, after first
modifying the curve into its equivariant form. This approach allows
the use of representation theory to derive a range of results at
lower computational cost.  We discuss the development of this theory
for hyperelliptic curves and how it may be extended in the future.
\end{abstract}

\section{Introduction} \label{SEC_Intro}
In this paper we work with generalisations of the Weierstra{\ss} $\wp$-function.
This is an elliptic function, and so periodic with respect to two independent complex periods $\omega_1,\omega_2$:
\begin{equation} \label{eq:Intro_period}
\wp(u + \omega_1) = \wp(u + \omega_2) = \wp(u), \qquad \mbox{for all } \, u \in \mathbb{C}.
\end{equation}
The $\wp$-function has the simplest possible pole structure for an elliptic function and satisfies many interesting properties.
For example, the $\wp$-function can be used to parametrise an elliptic curve,
\begin{equation} \label{eq:Intro_ec}
y^2 = 4x^3 - g_2x - g_3,
\end{equation}
where $g_2$ and $g_3$ are constants.  It also satisfies the following well-known differential equations,
\begin{eqnarray}
\big(\wp'(u)\big)^2 &=& 4\wp(u)^3 - g_2\wp(u) - g_3,            \label{eq:Intro_elliptic_diff1}     \\
           \wp''(u) &=& 6\wp(u)^2 - \textstyle \frac{1}{2}g_2.  \label{eq:Intro_elliptic_diff2}
\end{eqnarray}
Elliptic functions have been the subject of much study since their discovery and have been extensively used to enumerate solutions of non-linear wave equations.  They occur in many physical applications; traditionally the arc-length of the lemniscate and the dynamics of spherical pendulums, \cite{MKandM99}, but also in cryptography, \cite{Washington08}, and soliton solutions to the KdV equation, \cite{Dodd82}.

Recent times have seen a revival of interest in the theory of their generalisations,
the Abelian functions, which have multiple independent periods, or more accurately,
are periodic with respect to a multi-dimensional period lattice.  The lattice is usually defined
in association with an underlying algebraic curve.  These functions are also
beginning to find a wide range of applications. For example, they give further
solutions to the KdV equation along with solutions to other integrable equations
from the KP-hierarchy, (see for example \cite{bel97}, \cite{bego08} \cite{MEe09}). They have also been used to describe geodesic motions in
certain space-time metrics, \cite{EHKKL}.

This introduction will continue with a discussion of the motivation
to study such functions followed by a reminder of the Riemann-Roch
theorem and its implications in this context.  Then in
Section \ref{SEC_Sigma} we discuss how the theory of higher genus
functions can be derived by defining them via an auxiliary function,
the $\sigma$-function.  We describe how a series expansion
of this function can be derived and the results which follow.  In
Section \ref{SEC_Equivar} we proceed to give an alternative
definition of the functions, developed in recent years. This
algebraic definition and the following approach is termed
equivariant and brings out an ${\mathfrak sl}_2(\mathbb C)$
structure in the equations which can be exploited to reduce the
computation involved in working with the functions. Finally in
Section \ref{SEC_Future} we discuss the possible future directions
of this research.

\subsection{Motivation}

As discussed above, the Weierstra{\ss} $\wp$-function and its generalisations have strong connections with integrable PDEs.
The KdV equation has a translational symmetry reduction described by
the Weierstra{\ss} $\wp$-function and hence is associated with a genus
one curve. More generally the KdV with other specific, differential
constraints is associated with generalised $\wp$-functions and genus
$g$ hyperelliptic curves. For details see the example below. Of
course, such solutions also arise via the Integrable Scattering
Transform and are expressed in terms of $\theta$-functions. So the
interest for Integrable Systems is, in part, to obtain equations
satisfied by $\wp$-functions for general plane algebraic curves. On
a wider plain there is interest in application to problems in GR,
Billiards, Statistical mechanics and elsewhere. Most fundamentally
we are concerned here with the interplay between function theory and
the geometry of curves.

\subsubsection*{Example: A genus two curve}
Consider for $x,y\in \C$, the genus two curve
\[
y^2 = \lambda_6x^6 + 6\lambda_5x^5 + 15\lambda_4x^4 + 20\lambda_3x^3 + 15\lambda_2x^2 + 6\lambda_1x + \lambda_0.
\]
In this case there are generalised $\wp$-functions, denoted $\wp_{ij}(u_1,u_2)$ for $i,j=1,2$.
They are functions of $g=2$ variables and are multiply periodic.
We can denote the derivatives of the functions using additional indices, so for example
\[
\frac{\partial}{\partial u_i} \wp_{jk} = \wp_{ijk}.
\]
For simplicity we shorten the differentiation notation so that
\[
\partial_i = \frac{\partial}{\partial u_i}
\]
for example.  It is the case that $\wp_{ij}=\wp_{ji}$ and so we have three generalised $\wp$-functions.  We also have
\[
\partial_i \wp_{jk} = \partial_j\wp_{ik}.
\]
and so generally the order of the indices is not relevant.  We adopt the notation of writing the indices in ascending numerical order.

These functions can be shown to satisfy the following set of partial differential equations.
\begin{align}
\textstyle -\frac13\wp_{2222}+2\wp_{22}^2
&= \textstyle \lambda_4\wp_{22}-2\lambda_5\wp_{12}+\lambda_6\wp_{11}+\lambda_2\lambda_6-4\lambda_3\lambda_5+3\lambda_4^2 \nonumber \\
\textstyle -\frac13\wp_{1222}+2\wp_{12}\wp_{22}
&= \textstyle \lambda_3\wp_{22}-2\lambda_4\wp_{12}+\lambda_5\wp_{11}+\frac12(\lambda_1\lambda_6-3\lambda_2\lambda_5+2\lambda_3\lambda_4) \nonumber \\
\textstyle -\frac13\wp_{1122}+\frac23\wp_{11}\wp_{22}+\frac43\wp_{12}^2
&= \textstyle \lambda_2\wp_{22}-2\lambda_3\wp_{12}+\lambda_4\wp_{11}+\frac16(\lambda_0\lambda_6-9\lambda_2\lambda_4+8\lambda_3^2) \nonumber \\
\textstyle -\frac13\wp_{1112}+2\wp_{11}\wp_{12}
&= \textstyle \lambda_1\wp_{22}-2\lambda_2\wp_{12}+\lambda_3\wp_{11}+\frac12(\lambda_0\lambda_5-3\lambda_1\lambda_4+2\lambda_2\lambda_3) \nonumber \\
\textstyle -\frac13\wp_{1111}+2\wp_{11}^2
&= \textstyle \lambda_0\wp_{22}-2\lambda_1\wp_{12}+\lambda_2\wp_{11}+\lambda_0\lambda_4-4\lambda_1\lambda_3+3\lambda_2^2 \nonumber
\end{align}
Consider the first equation.  By a rational transformation we may, without loss of generality, set $\lambda_6=0,\, \break \lambda_4=0,\,\lambda_5=\frac{1}{6}$ to give
\[
-\frac{1}{3}\wp_{2222} + 2\wp_{22}^2 = - \frac{1}{3}\wp_{12} - \frac{2}{3}\lambda_3.
\]
We differentiate the identity with respect to $u_2$ to find
\[
-\frac{1}{3}\wp_{22222} + 4\wp_{22}\wp_{222} = -
\frac{1}{3}\wp_{122},
\]
and making the change of variables
$\partial_1=\partial_t,\,\partial_2=\partial_x\,\wp_{22}=U(x,t)$
this is just
\[
U_t - U_{xxx}+12UU_x=0.
\]
Thus we have the KdV-equation, along with four additional differential identities.

We can write these five partial differential equations in a parametrised form.  Thus the relations on 4-index objects are
\begin{equation} \label{eq:g2_4indexp_Para}
\textstyle -\frac13\partial^4\wp+2(\partial^2\wp)^2
= \textstyle H_0\bar\partial^2\wp-2H_1\bar\partial\partial\wp+H_2\partial^2\wp+G
\end{equation}
where $\partial=\partial_1+\nu\partial_2$, $\bar\partial=\partial_2,$ the $H_i$ are functions of $\lambda_i$ and $G$ is a function of $\lambda_i$ and $\nu$. Another set of identities in the function are those quadratic in the $3$-index objects.  These may be written in the parametrised form
\[
\textstyle
\partial^3\wp\bar\partial^3\wp-6(\partial^2\wp)(\partial\bar\partial\wp)(\bar\partial^2\wp)
=
H'(\bar\partial^2\wp,\bar\partial\partial\wp,\partial^2\wp)
\textstyle
\]
where $H'$ is a polynomial of degree two in the second derivatives of $\wp$ with coefficients in $\lambda_i$.

Equation (\ref{eq:g2_4indexp_Para}) is a reduction of the Boussinesq equation and so we can write down a Lax pair for the genus two $\wp$-function equations:
\begin{align*}
\textstyle L &= \zeta\bar\partial+\partial^2-2(\partial^2\wp) \textstyle \\
\textstyle M &= \textstyle \partial^3+\frac12\zeta'\partial^2+\frac{1}{10}(\zeta^2)'\partial-3(\partial^2\wp)\partial
\textstyle - \frac{3}{2}(\partial^3\wp)-\zeta'(\partial^2\wp)+\frac{3}{2}\zeta(\partial\bar\partial\wp) \textstyle
\end{align*}
where $\zeta^2=g(\nu)$ and $\zeta'=\frac{d\zeta}{d\nu}$, \cite{AEE03}.

\subsection{Riemann-Roch Theorem} \label{SubSec_RR}
Let $X$ be a compact, non-singular, complex algebraic curve (Riemann Surface) and $D$ a divisor on $X$:
\[
D = n_1P_1+n_2P_2+\ldots +n_rP_r, \qquad P_i\in X,\,n_i\in \Z.
\]
The Riemann-Roch theorem relates dimensions of spaces of functions
and of holomorphic differentials  associated with a divisor to the
degree of the divisor and the genus of the curve:
\[
\dim H^0_D-\dim H^1_D = 1-g+ \text{deg} D.
\]
Here $H^0_D$ is the $\mathbb C$-vector space of meromorphic functions having poles of order at most $n_i$
at each $P_i$, $H^1_D$ the $\mathbb C$-vector space of holomorphic
differentials with zeros of order at least $n_i$ at each $P_i.$

More useful for our immediate purpose is the formulation,
\[
(\dim H^0_{D+P}-\dim H^0_D)+(\dim H^1_D-\dim H^1_{D+P})=1.
\]

Thus it is that in adding a point $P$ to the divisor $D$,
\emph{either} $H^0$ augments by $+1$ and $H^1$ is unaltered
\emph{or} $H^1$ augments by $-1$ and $H^0$ remains unaltered.

At a generic point, $P,$ on the curve, consider the family of
divisors $nP$ for $n\in {\mathbb N}\cup\{0\}$. The following
pattern of dimensions obtains.
\[
\begin{array}{cccccccc}
D&0&P&\ldots&gP&(g+1)P&(g+2)P&\dots\\
\dim H^0_D&1&1&\ldots&1&2&3&\ldots\\
\dim H^1_D&g&g-1&\ldots&0&0&0&\ldots\\
\end{array}
\]

Note that there are $g$ `gaps' where the addition of an extra $P$
results in no new functions on $X.$ Each gap is associated with the
loss of a holomorphic differential. The pattern is, however, broken
at special choices of divisor, more specifically special choices of
$P$ in the sequence above. At such \emph{Weierstra{\ss} points} the
interval of $g$ gaps is broken as illustrated in the examples below.

\subsubsection*{Example: Genus two hyperelliptic case}
There are Weierstra{\ss} gaps at $P$ and $3P$:
\[
\begin{array}{ccccccccccccc}
D&0&P&2P&3P&4P&5P&6P&7P&8P&9P&10P\\
\dim H^0_D&1&1&2&2&3&4&5&6&7&8&9\\
\dim H^1_D&2&1&1&0&0&0&0&0&0&0&0\\
&1&              &x&               &x^2&y&x^3&xy&x^4&x^2y&x^5,y^2    \\
& &\downarrow    & &\downarrow     &   & &   &  &   &    &\Downarrow \\
& &\frac{dx}{f_y}& &\frac{xdx}{f_y}&   & &   &  &   &    &f(x,y)
\end{array}
\]

The curve is a relation on $H^0_{10P}$:
\[
f(x,y)=y^2+[x^2]y+[x^5]=0.
\]
The notation $[x^r]$ denotes a polynomial of degree $n$ in $x.$
A relation arises at $10P$ because we can create elements in
$H^0_{10P}\backslash H^0_{9P}$ in two ways when only one extra
dimension is available. The diagram also indicates the form of
holomorphic differential (depending on $f(x,y)$) lost at each gap.

\subsubsection*{Example: Genus three non-hyperelliptic case}
There are three Weierstra{\ss} gaps at $P$, $2P$ and $5P.$
\[
\begin{array}{ccccccccccccccc}
D&0&P&2P&3P&4P&5P&6P&7P&8P&9P&10P&11P&12P\\
\dim H^0_D&1&1&1&2&3&3&4&5&6&7&8&9&10\\
\dim H^1_D&3&2&1&1&1&0&0&0&0&0&0&0&0\\
&1&&&x&y&&x^2&xy&y^2&x^3&x^2y&xy^2&x^4,y^3\\
&&\downarrow&\downarrow&&&\downarrow&&&&&&&\Downarrow\\
&&\frac{ydx}{f_y}&\frac{dx}{f_y}&&&\frac{xdx}{f_y}&&&&&&&f(x,y)
\end{array}
\]
The curve is a relation on $H^0_{12P}$:
\[
f(x,y)=y^3+[x]y^2+[x^2]y+[x^4]=0.
\]

\subsubsection*{Transformations}
In each of the examples above $x$ and $y$ are coordinates on $X$, i.e. maps $X\rightarrow \mathbb{P}^1.$
We shall be interested in coordinate transformations
\[
x\mapsto \tilde x=\psi\circ x \quad \mbox{for} \quad \psi:\mathbb{P}^1\rightarrow \mathbb{P}^1.
\]
Note that under such transformations the coordinates $x$ and $\tilde
x$ have poles at distinct divisors on $X$: Weierstra{\ss} points for
$x$ map to more general \emph{special} divisors for $\tilde x$ and
give different models of $X$: rationally equivalent, and usually
singular, curves in $\mathbb{P}^2.$ For example, the following three
curves are all rationally equivalent.
\[
y^3+[x^3][x]^2=0\quad \longleftrightarrow\quad \tilde
y^3+[\tilde x^4][\tilde x]^2=0\quad\longleftrightarrow\quad [\hat x]\hat y^3+[\hat x^4]=0
\]
It so happens that this example is of genus three. Every genus three
$X$ has a nonsingular embedding in $\mathbb{P}^2$ and the third
equation above is it, \cite{FarkasKra}.

\section{Working with Kleinian functions} \label{SEC_Sigma}

Weierstra{\ss} introduced an auxiliary function, $\sigma(u)$, in his theory which satisfies
\begin{eqnarray}
\wp(u) &=& - \frac{d^2}{d u^2} \log \big[ \sigma(u) \big].  \label{eq:Intro_elliptic_ps}
\end{eqnarray}
The $\sigma$-function plays a crucial role in both the generalisation and in applications of the theory.  It satisfies the following two term addition formula,
\begin{eqnarray}
- \frac{\sigma(u+v)\sigma(u-v)}{\sigma(u)^2\sigma(v)^2} = \wp(u) - \wp(v). \label{eq:Intro_elliptic_add}
\end{eqnarray}
The Weierstra{\ss} $\sigma$-function is usually defined as an infinite product over the periods, but it can also be expressed using the first Jacobi $\theta$-function, multiplied by a constant and exponential factor.  It is this definition that generalises naturally to give higher genus $\sigma$-functions as multivariate functions defined using the Riemann $\theta$-function, as presented in Definition \ref{def:HG_sigma}.  Once the generalised $\sigma$-function is defined we may then consider generalised $\wp$-functions in analogy to equation (\ref{eq:Intro_elliptic_ps}), as presented in Definition \ref{def:nip}.

This approach was pioneered by Klein and Baker as described in Baker's classic texts \cite{ba97} and \cite{ba07}.  Hence these generalised functions are sometimes called Kleinian.  The periodicity conditions of all the higher genus functions are defined using the period lattice of an underlying algebraic curve.  In this section we work with the following class of curves, while in Section \ref{SEC_Equivar} we consider an equivariant class.

\begin{definition} \label{def:HG_general_curves}
For two coprime integers $(n,s)$ with $s>n$ we define an \textbf{$\bm{(n,s)}$-curve}, denoted C, as an algebraic curve defined by $f(x,y)=0$, where
\begin{equation} \label{eq:general_curve}
f(x,y) = y^n + p_1(x)y^{n-1} + p_{2}(x)y^{n-2} + \cdots + p_{n-1}(x)y - p_{n}(x).
\end{equation}
Here $x$, $y$ are complex variables and $p_j(x)$ are polynomials in $x$ of degree (at most) $\lfloor{ js } / n \rfloor$.  We define a simple subclass of the curves by setting $p_j(x)=0$ for $0\leq j \leq n-1$.  These curves are defined by
\begin{equation} \label{eq:ct_curve}
f(x,y) = y^n - (x^s +\lambda_{s-1}x^{s-1}+\dots+\lambda_{1}x+\lambda_0)
\end{equation}
and are called \textbf{cyclic $\bm{(n,s)}$-curves}.
We denote the curve constants by $\lambda_j$ for the cyclic curves and by $\mu_j$ for the general $(n,s)$-curves.
Note that in the literature the word `cyclic' is sometimes replaced by `strictly' or `purely $n$-gonal'.
\end{definition}
Restricting to the cyclic classes results in much easier computation, but it not usually necessary for theoretical reasons.  However, the cyclic curves do possess extra symmetry which manifests itself in the associated functions satisfying a wider set of addition formulae.

If the curve $C$ is a non-singular model of $X$ then the genus is given by $g=\frac{1}{2}(n-1)(s-1)$ and the associated functions, to be defined shortly, will be multivariate with $g$ variables, $\bm{u} = (u_1, \dots, u_g)$.  As an example, the elliptic curve in equation (\ref{eq:Intro_ec}) is a (2,3)-curve and the associated Weierstra{\ss} $\sigma$ and $\wp$-functions depend upon a single complex variable $u$.

\subsection{Defining the functions}

We first describe how the period lattice associated to the curve may be constructed.  We start by choosing a basis for the space of differential forms of the first kind; the differential 1-forms which are holomorphic on the curve, $C$.  There is a standard procedure to construct this basis for an $(n,s)$-curve, (see for example \cite{N10}).  In general the basis is given by $h_idx/f_y, i=1 \dots g$ where the $h_i$ are monomials in $(x,y)$ whose structure may be predicted by the Weierstra{\ss} gap sequence, although other normalisations are sometimes used.

We next choose a symplectic basis in \(H_1(C,\mathbb{Z})\) of cycles (closed paths) upon the compact Riemann surface defined by $C$.  We denote these by
$\{\alpha_1, \dots \alpha_g,$ $\beta_1, \dots \beta_g\}$.  We ensure the cycles have intersection numbers
\begin{eqnarray*}
\alpha_i \cdot \alpha_j = 0, \qquad \beta_i \cdot \beta_j = 0, \qquad
\alpha_i \cdot \beta_j = \delta_{ij} =
\left\{ \begin{array}{rl}
1 & \mbox{if }  i = j, \\
0 & \mbox{if }  i \neq j.
\end{array} \right.
\end{eqnarray*}
The choice of these cycles is not unique, but the functions will be independent of the choice.

We introduce $\bm{dr}$ as a basis of differentials of the second kind.  These are meromorphic differentials on $C$ which have their only pole at $\infty$.  This basis is usually derived alongside the fundamental differential of the second kind.  Rather than repeat the full details here we refer the reader to \cite{ba97} for the general theory and \cite{bg06} which gives a detailed example construction for the (3,5)-curve.

We can now define the standard period matrices associated to the curve as
\begin{eqnarray*}
\begin{array}{cc}
      \omega'  = \left( \oint_{\alpha_k} du_\ell \right)_{k,\ell = 1,\dots,g} &
\qquad\omega'' = \left( \oint_{ \beta_k} du_\ell \right)_{k,\ell = 1,\dots,g}  \\
        \eta'  = \left( \oint_{\alpha_k} dr_{\ell} \right)_{k,\ell = 1,\dots,g} &
\qquad  \eta'' = \left( \oint_{ \beta_k} dr_{\ell} \right)_{k,\ell = 1,\dots,g}
\end{array}.
\end{eqnarray*}
We define the period lattice $\Lambda$ formed from $\omega', \omega''$ by
\[
\Lambda = \big\{ \omega'\bm{m} + \omega''\bm{n}, \quad \bm{m},\bm{n} \in \mathbb{Z}^g \big\}.
\]
Note the comparison with equation (\ref{eq:Intro_period}) and that the period matrices play the role of the scalar periods in the elliptic case.  The functions we treat are defined upon \(\mathbb{C}^g\) with coordinates usually expressed as
\[
\bm{u}=(u_1, \dots, u_g).
\]
The period lattice $\Lambda$ is a lattice in the space $\mathbb{C}^g$.  The Jacobian variety of $C$ is presented by $\mathbb{C}^g/\Lambda$, and is denoted by $J$. We define \(\kappa\) as the modulo \(\Lambda\) map,
\[
\kappa \ : \ \mathbb{C}^g \to J.
\]
For $k=1$, $2$, $\dots$ define $\mathfrak{A_k}$, the \emph{Abel map} from the $k$-th symmetric product \(\mathrm{Sym}^k(C)\) of \(C\)  to $J$ by
\begin{eqnarray*}
\mathfrak{A_k}: \mbox{Sym}^k(C) &\to&     J \nonumber \\
(P_1,\dots,P_k)   &\mapsto&
\left( \int_{\infty}^{P_1} \bm{du} + \dots + \int_{\infty}^{P_k} \bm{du} \right) \pmod{\Lambda},
\label{eq:Abel}
\end{eqnarray*}
where the $P_i$ are again points upon $C$.  Denote the image of the $k$-th Abel map by $W^{[k]}$ and define the \emph{$k$-th standard theta subset} (often referred to as the $k$-th strata) by
\begin{eqnarray*}
\Theta^{[k]} = W^{[k]} \cup [-1]W^{[k]},
\end{eqnarray*}
where \([-1]\) means that
\begin{eqnarray*}
[-1](u_1, \dots ,u_g) = (-u_1, \dots ,-u_g).
\end{eqnarray*}

We are considering functions that are periodic with respect to the lattice $\Lambda$.
\begin{definition} \label{def:HG_Abelian}
Let $\mathfrak{M}(\bu)$ be a meromorphic function of $\bu \in \mathbb{C}^g$.  Then $\mathfrak{M}$ is a \textbf{standard Abelian function associated with $\bm{C}$} if it has poles only along \(\kappa^{-1}(\Theta^{[g-1]})\) and satisfies, for all $\bm{\ell}\in \Lambda$,
\begin{equation} \label{eq:HG_Abelian}
\mathfrak{M}(\bu + \bm{\ell}) = \mathfrak{M}(\bu).
\end{equation}
\end{definition}
The generalisations of the Weierstra{\ss} $\wp$-functions with which we deal will satisfy equation (\ref{eq:HG_Abelian}).  The generalisation of the $\sigma$-function we define below will be quasi-periodic.  Let \(\bm{\delta} = \omega'\bm{\delta'}+\omega''\bm{\delta''}\) be the Riemann constant with base point $\infty$.  Then $[\bm{\delta}]$ is the theta characteristic representing the Riemann constant for the curve C with respect to the base point $\infty$ and generators $\{\alpha_j,\ \beta_j\}$ of $H_1(C,\mathbb{Z})$.  (See for example \cite{bel97} pp23-24.)

\begin{definition} \label{def:HG_sigma}
The \textbf{Kleinian $\bm{\sigma}$-function} associated to a general $(n,s)$-curve is defined using a multivariate $\theta$-function with characteristic \(\bm{\delta}\) as
\begin{eqnarray*}
\sigma(\bu) &=& c \exp \big( \textstyle \frac{1}{2} \bm{u} \eta' (\omega')^{-1} \bm{u}^T \big)
\cdot \theta[\bm{\delta}]\big((\omega')^{-1}\bm{u}^T \hspace*{0.05in} \big| \hspace*{0.05in} (\omega')^{-1} \omega''\big).  \\
&=& c \exp \big( \textstyle \frac{1}{2} \bm{u} \eta' (\omega')^{-1} \bm{u}^T \big)
\displaystyle \sum_{\bm{m} \in \mathbb{Z}^g} \exp \bigg[ 2\pi i \big\{
\textstyle \frac{1}{2} (\bm{m} + \bm{\delta'})^T (\omega')^{-1} \omega''(\bm{m}
+ \bm{\delta'}) + (\bm{m} + \bm{\delta'})^T ((\omega')^{-1} \bm{u}^T + \bm{\delta''} )\big\} \bigg].
\end{eqnarray*}
The constant $c$ is dependent upon the curve parameters and the basis of cycles and is fixed later,
following Lemma {\rm \ref{lem:sigexp}}.
\end{definition}

We now summarise the key properties of the $\sigma$-function.  See \cite{bel97} or \cite{N10} for the construction of the $\sigma$-function to satisfy these properties.  For any point $\bu \in \mathbb{C}^g$ we denote by $\bu'$ and $\bu''$ the vectors in $\mathbb{R}^g$ such that
\[
\bu=\omega'\bm{u}'+\omega''\bm{u}''.
\]
Therefore a point  $\bm{\ell}\in\Lambda$ is written as
\[
\bm{\ell} = \omega'\bm{\ell'} + \omega''\bm{\ell''} \in \Lambda, \qquad \bm{\ell'},\bm{\ell''} \in \mathbb{Z}^g.
\]
For $\bu, \bv \in \mathbb{C}^g$ and $\bm{\ell} \in \Lambda$, define $L(\bu,\bv)$ and $\chi(\bm{\ell})$ as follows:
\begin{eqnarray*}
L(\bu,\bv) &=& \bu^T \big( \eta'\bm{v'} + \eta''\bm{v''} \big), \\
\chi(\bm{\ell}) &=& \exp \big[ 2 \pi \mbox{i} \big\{  (\bm{\ell'})^T\delta'' - (\bm{\ell''})^T\delta'
+ \textstyle \frac{1}{2}(\bm{\ell'})^T \bm{\ell''} \big\} \big].
\end{eqnarray*}

\begin{lemma}
Consider the $\sigma$-function associated to an $(n,s)$-curve.
\begin{itemize}
\item It is an entire function on $\mathbb{C}^g$.
\item It has zeros of order one along the set  $\kappa^{-1}(\Theta^{[g-1]})$.  Further, we have $\sigma(\bu)~\neq~0$ outside the set.
\item For all $\bu \in \mathbb{C}^g, \bm{\ell} \in \Lambda$ the function has the quasi-periodicity property:
\begin{eqnarray*}
\sigma(\bu + \bm{\ell}) = \chi(\bm{\ell})
\exp \left[ L \left( \bu + \frac{\bm{\ell}}{2}, \bm{\ell} \right) \right] \sigma(\bu).
\end{eqnarray*}
\item It has definite parity given by
\[
\sigma(-\bu) = (-1)^{\frac{1}{24}(n^2-1)(s^2-1)}\sigma(\bu).
\]
\end{itemize}
\end{lemma}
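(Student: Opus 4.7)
The plan is to reduce each of the four claims to a classical property of the Riemann $\theta$-function with characteristic, then supply the $(n,s)$-curve-specific input. Writing $\tau = (\omega')^{-1}\omega''$ and $\bm{z} = (\omega')^{-1}\bu^T$, Definition \ref{def:HG_sigma} reads
\[
\sigma(\bu) \;=\; c\, e^{Q(\bu)}\,\theta[\bm{\delta}](\bm{z}\mid\tau),
\qquad Q(\bu) = \tfrac12\,\bu\,\eta'(\omega')^{-1}\bu^T.
\]
The structural point is that the quadratic prefactor $e^Q$ has been engineered so that, together with the standard automorphy of $\theta[\bm{\delta}]$ and the Legendre relation among $\omega',\omega'',\eta',\eta''$, it produces precisely the target transformation law.

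The first three bullets then follow from standard $\theta$-function machinery. \emph{Entirety}: Riemann's bilinear relations on the chosen symplectic basis give $\mathrm{Im}\,\tau>0$, so the $\theta$-series converges absolutely and uniformly on compacta, and $e^Q$ is entire; hence $\sigma$ is entire. \emph{Zero set}: by Riemann's vanishing theorem the zero divisor of $\bm{z}\mapsto\theta[\bm{\delta}](\bm{z}\mid\tau)$ on $J$ is exactly $\kappa(\Theta^{[g-1]})$ with simple zeros, so the same follows for $\sigma$ after lifting by $\kappa$ and multiplying by the nowhere-vanishing $e^Q$. \emph{Quasi-periodicity}: apply the classical shift formula for $\theta[\bm{\delta}]$ under $\bm{z}\mapsto\bm{z}+\bm{\ell'}+\tau\bm{\ell''}$, compute the change in $Q$ using $\bm{\ell}=\omega'\bm{\ell'}+\omega''\bm{\ell''}$, and invoke the Legendre relation; the $\bu$-quadratic terms cancel, the $\bu$-linear terms assemble into $L(\bu+\bm{\ell}/2,\bm{\ell})$, and the constant phases (coming from $\bm{\delta}',\bm{\delta}''$ together with the half-integer shifts) collect into $\chi(\bm{\ell})$.

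The parity is the substantive step. Since $Q(-\bu)=Q(\bu)$ and $\bm{z}\mapsto-\bm{z}$ under $\bu\mapsto-\bu$, the parity of $\sigma$ equals that of $\theta[\bm{\delta}]$, and for a half-integer characteristic one has $\theta[\bm{\delta}](-\bm{z}\mid\tau) = (-1)^{4(\bm{\delta}')^T\bm{\delta}''}\theta[\bm{\delta}](\bm{z}\mid\tau)$. The task reduces to showing
\[
4(\bm{\delta}')^T\bm{\delta}'' \equiv \tfrac{1}{24}(n^2-1)(s^2-1) \pmod 2
\]
for the Riemann constant at $\infty$ on an $(n,s)$-curve. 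I would realise $\bm{\delta}$ explicitly in terms of the Weierstra{\ss} gap sequence at $\infty$ (gaps being the positive integers not of the form $an+bs$ with $a,b\geq 0$) and reduce the parity to a weighted count of gaps; the closed expression on the right is well known to equal exactly such a count.

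The main obstacle is this last reduction: aligning sign and basis conventions for the characteristic, the theta parity formula, and the chosen symplectic basis, and then carrying out the combinatorial evaluation. In the literature this is typically handled either by explicit low-genus computation followed by an induction/symmetry argument, or by appealing to the general $(n,s)$-curve treatments of Nakayashiki \cite{N10} and Buchstaber--Enolski--Leykin \cite{bel97}, where this parity is established.
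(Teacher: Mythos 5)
Your proposal is correct and follows essentially the same route as the paper: the paper likewise treats entirety as immediate from the definition, attributes the zero divisor and quasi-periodicity to classical properties of the multivariate $\theta$-function (citing \cite{ba97}), and obtains the parity from Proposition 4(iv) of \cite{N10}, which is exactly the half-integer-characteristic computation you reduce to. The only difference is that you spell out the mechanisms (convergence of the $\theta$-series, Riemann vanishing, the Legendre relation, and the gap-sequence evaluation of the characteristic) where the paper simply cites them.
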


\begin{proof}
The function is clearly entire from the definition, while the zeros and the quasi-periodicity are classical results, (see \cite{ba97}), that are fundamental to the definition of the function.  They both follow from the properties of the multivariate $\theta$-function.  The parity property is given by Proposition 4(iv) in \cite{N10}. \\
\end{proof}
We can now finally define $\wp$-functions using an analogy of equation (\ref{eq:Intro_elliptic_ps}).  Since there is more than one variable we need to be clear which we differentiate with respect to.  We define multiple $\wp$-functions and use the following \emph{index notation}.

\begin{definition} \label{def:nip}
Define \textbf{$\bm{m}$-index Kleinian $\bm{\wp}$-functions} for $m\geq2$ by
\begin{eqnarray*}
\wp_{i_1,i_2,\dots,i_m}(\bu) = - \frac{\partial}{\partial u_{i_1}} \frac{\partial}{\partial u_{i_2}}\dots
\frac{\partial}{\partial u_{i_m}} \log \big[ \sigma(\bu) \big],
\end{eqnarray*}
where $i_1 \leq \dots \leq i_m \in \{1,\dots,g\}$.
\end{definition}
The $m$-index $\wp$-functions are meromorphic with poles of order $m$ when $\sigma(\bu)=0$.  We can check that they satisfy equation (\ref{eq:HG_Abelian}) and hence they are Abelian.   The $m$-index $\wp$-functions have definite parity with respect to the change of variables $\bu \to [-1]\bu$.  This is independent of the underlying curve, with the functions odd if $m$ is odd and even if $m$ is even.  Note that the ordering of the indices is irrelevant and so for simplicity we always order in ascending value.

\subsection{Weights and expansions}

Klein considered generalisations of the $\wp$-functions associated with certain \emph{hyperelliptic curves}.  These can be thought of as the $(n,s)$-curves with $n=2$, (when $s=3$ the hyperelliptic curves reduce to elliptic curves).  The $(n,s)$-curves where $n=3$ are labeled \textit{trigonal curves} and those with $n=4$ are labeled  \textit{tetragonal curves}.  The original work of Klein and Baker motivated the general definitions above, which were first developed in \cite{bel97} and \cite{eel00}.  Many results for hyperelliptic curves were presented in \cite{bel97} with these methods applied to trigonal curves by the same authors in \cite{bel00}.  The tetragonal curves were first considered in \cite{MEe09}.

The methods of Klein become harder to implement when $n=3$ and harder still when $n=4$.  The difficulties are not just computational but theoretical.  This led to the development of a related approach making use of the series expansion of the $\sigma$-function.  This approach was applied to the two canonical trigonal cases, first in \cite{eemop07} and \cite{bego08} and then again after further development in \cite{MEeo11} and \cite{ME11}.  It was essential to the development of the tetragonal theory in \cite{MEe09} and the theory of higher genus trigonal curves in \cite{MEhgt10}.  In this section we discuss how such a series expansion may be derived and used.

\subsubsection*{The Sato weights}

For a given $(n,s)$-curve we can define a set of weights, denoted by $\mathrm{wt}$ and often referred to as the \emph{Sato weights}.  We start by setting $\mathrm{wt}(x)=-n, \mathrm{wt}(y)=-s$ and then choose the weights of the curve parameters to be such that the curve equation is homogeneous.  We see that for cyclic curves this imposes $\mathrm{wt}( \lambda_j )=-n(s-j)$ while for the non-cyclic curves we usually label the $\mu_j$ with their weight.
For example, the general (2,5)-curve is given by
\[
y^2 + y(\mu_{1}x^2 + \mu_{3}x + \mu_{5}) = x^5 + \mu_{2}x^4 + \mu_{4}x^3 + \mu_{6}x^2 + \mu_{8}x + \mu_{10},
\]
and the cyclic restriction by
\[
y^2 = x^5 + \lambda_4x^4 + \lambda_3x^3 + \lambda_2x^2 + \lambda_1x + \lambda_0.
\]
In both cases the weight of $x$ and $y$ are respectively $2$ and $5$.  Both equations are homogeneous of weight $-10$ and so we see that the coefficients $\mu_{j}$ are labeled with the negative of their weight while the weights of the $\lambda_0,\lambda_1,\lambda_2,\lambda_3,\lambda_4$ are given by $10,8,6,4,2$ respectively.

The Abel map $\mathfrak{A}_1$ gives an embedding of the curve $C$ upon which we can define $\xi= x^{-\frac{1}{n}}$ as the local parameter at the origin, $\mathfrak{A}_1(\infty)$.  We can then express $y$ and the basis of differentials using $\xi$ and integrate to give series expansions for $\bu$.  We can check the weights of $\bu$ from these expansions and see that they are prescribed by the Weierstra{\ss} gap sequence.  For example, in the (2,5)-case $\bu=(u_1,u_2)$ has weight $(2,1)$.

By considering Definition \ref{def:nip} we see that the weight of the $\wp$-functions is the negative of the sum of the weights of the variables indicated by the indices, irrespective of what the weight of $\sigma(\bu)$ is.  We note that curves of the same genus will, notationally, have the same $\wp$-functions, but may exhibit different behavior as indicated by the different weights of the variables and functions.  We will discuss the weight of the $\sigma$-function below.  All other functions discussed are constructed from $\sigma$ or $\wp$-functions and their weights follow accordingly.  We can show that all the equations in the theory are homogeneous in these weights, with a more detailed discussion of this available, for example, in \cite{MEe09}.

\subsubsection*{The sigma-function expansion}

We can construct a series expansion of the $\sigma$-function about the origin, as described below.

\begin{lemma} \label{lem:sigexp}
The Taylor series expansion of $\sigma(\bu)$ about the origin may be written as
\[
\sigma(\bu) = K \cdot SW_{n,s}(\bu) + \sum_{k=0}^{\infty} C_{k}(\bu).
\]
Here $K$ is a constant, $SW_{n,s}$ the Schur-Weierstra{\ss} polynomial generated by $(n,s)$ and each $C_k$ a finite, polynomial composed of products of monomials in $\bm{u}$ of weight $k$ multiplied by monomials in the curve parameters of weight $-(\mbox{wt}(\sigma)-k)$.
\end{lemma}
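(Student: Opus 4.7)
The plan is to combine two structural facts: that $\sigma$ is entire (so it has a convergent Taylor series about the origin), and that every ingredient in its construction carries a definite Sato weight, which forces a rigid bi-grading of the expansion.

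First I would write $\sigma(\bu) = \sum_{\alpha} c_{\alpha}(\lambda)\, \bu^{\alpha}$, where $\alpha$ ranges over multi-indices and each $c_{\alpha}(\lambda)$ is a formal power series in the curve parameters, arising by expanding the theta-series and the prefactor $\exp(\tfrac12 \bu \eta'(\omega')^{-1}\bu^{T})$ in the definition. The previous lemma guarantees that $\sigma$ is entire, so this expansion converges on all of $\mathbb{C}^{g}$. Next I would invoke homogeneity under the Sato weights. Every object entering the definition of $\sigma$—the holomorphic and second-kind differentials, the period matrices $\omega', \omega'', \eta', \eta''$ computed from them, and the Riemann constant—carries a definite Sato weight, so $\sigma$ is itself homogeneous of total weight $\mathrm{wt}(\sigma)$. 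Consequently each monomial $c_{\alpha}(\lambda)\bu^{\alpha}$ must carry weight $\mathrm{wt}(\sigma)$, which pins $c_{\alpha}(\lambda)$ down to a \emph{polynomial} in the curve parameters of the complementary weight.

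Then I would group the expansion by the weight $k$ of the $\bu$-monomial: writing $\sigma(\bu) = \sum_{k \geq 0} D_{k}(\bu)$, the piece $D_{k}$ consists of monomials $\bu^{\alpha}$ of weight $k$, each multiplied by a curve-parameter monomial of weight $-(\mathrm{wt}(\sigma)-k)$ in the paper's sign convention. For fixed $k$ there are only finitely many $\bu^{\alpha}$ of weight $k$, and because every $\lambda_{j}$ (or $\mu_{j}$) has a definite nonzero weight bounded away from $0$, the set of curve-parameter monomials of any fixed weight is also finite. So each $D_{k}$ is a finite polynomial, giving the required $C_{k}$ for $k \neq \mathrm{wt}(\sigma)$.

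Finally, I would isolate the piece $k = \mathrm{wt}(\sigma)$. In this weight the only curve-parameter monomial available is the empty one, so this contribution is a pure polynomial in $\bu$, independent of the curve parameters. To identify it with $SW_{n,s}$ I would specialise all curve parameters to zero, so that the underlying curve degenerates to $y^{n} = x^{s}$. Under this degeneration the $\sigma$-function collapses to a polynomial tau-function of the KP hierarchy associated to the $(n,s)$-point of Sato's Grassmannian, namely the Schur polynomial indexed by the Weierstra{\ss} gap sequence of the $(n,s)$-curve. This is the Schur-Weierstra{\ss} polynomial $SW_{n,s}(\bu)$. Matching homogeneity fixes the parameter-free part as a scalar multiple $K \cdot SW_{n,s}(\bu)$, with $K$ the normalisation constant arising from the prefactor $c$ in Definition \ref{def:HG_sigma}.

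The main obstacle is the last step: the identification of the parameter-free leading polynomial with the Schur-Weierstra{\ss} polynomial. This is not a formal consequence of the theta definition but relies on Sato-theoretic input (as developed by Buchstaber, Enol'skii and Leykin, and by Nakayashiki), namely that the $\sigma$-function of an $(n,s)$-curve, evaluated at the degenerate locus of curve parameters, is exactly the Schur polynomial whose partition is determined by the Weierstra{\ss} gap sequence. Everything else is bookkeeping in Sato weights and finiteness of weight-homogeneous monomial sets.
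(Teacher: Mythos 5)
Your proposal is correct and follows essentially the same route as the paper: the identification of the parameter-free leading part with the Schur--Weierstra{\ss} polynomial is imported from the work of Buchstaber--Enol'skii--Leykin and Nakayashiki (exactly the references the paper cites), and the rest is the same weight-homogeneity bookkeeping that splits the remainder into finitely many terms at each weight $k$. The only detail in the paper's proof you omit is the observation that the admissible weights of the curve coefficients force the index $k$ to increase in multiples of $n$, which is a refinement rather than a necessary step.
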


\begin{proof}
We refer the reader to \cite{N10} for a proof of the relationship between the $\sigma$-function and the Schur-Weierstra{\ss} polynomials and note that this was first discussed in \cite{bel99}.  We see that the remainder of the expansion must depend on the curve parameters and split it up into the different $C_k$ according to the weight split on terms between $\bu$ and either the $\mu_i$ or the $\lambda_i$.  We can see that each $C_k$ is finite since the number of possible terms with the prescribed weight properties is finite.  In fact, by considering the possible weights of the curve coefficients we see that the index $k$ in the sum will actually increase in multiples of $n$.\\
\end{proof}

The Schur-Weierstra{\ss} polynomials are Schur polynomials generated by a Weierstra{\ss} partition, derived in turn from the Weierstra{\ss} gap sequence for $(n,s)$.  See \cite{N10} and \cite{bel99} for more details on these polynomials.  This connection with the Schur-Weierstra{\ss} polynomials allows us to determine the weight of the $\sigma$-function as
\[
\mbox{wt}(\sigma) = (1/24)(n^2-1)(s^2-1).
\]
In Definition \ref{def:HG_sigma} we fix $c$ to be the value that makes the constant $K=1$ in the above lemma.  Some other authors working in this area may use a different constant and in general these choices are not equivalent.  However, the constant can be seen to cancel in the definition of all Abelian functions, leaving results between the functions independent of $c$.  Note that this choice of $c$ ensures that the Kleinian $\sigma$-function matches the Weierstra{\ss} $\sigma$-function when the $(n,s)$-curve is chosen to be the classic elliptic curve.
We note that there are other definitions of the $\sigma$-function and a definition more in line with the equivariant approach discussed in Section \ref{SEC_Equivar} is currently a topic of research.

\bigskip

The expansion can be constructed by considering each $C_k$ in turn, identifying the possible terms, forming a series with unidentified coefficients and then determining the coefficients by ensuring the expansions satisfied known properties of the $\sigma$-function.  For example, ensuring it vanishes on the strata or insuring that the known identities between the $\wp$-functions are satisfied.

Large expansions of this type were first introduced in \cite{bg06}, in which used the generalised $\sigma$-function to construct explicit reductions of the Benney equations, (see also \cite{bg03}, \cite{bg04} and
\cite{MEg09}).  Since then they have been an integral tool in the investigation of Abelian functions.  Recently computational techniques based on the weight properties have been used to derive much larger expansions and we refer the reader to \cite{MEe09} and \cite{MEhgt10} for a more detailed discussion of the constructions.

We note that such expansions are possible for the general $(n,s)$-curves, but that the calculations involved are far simpler for the cyclic cases.  However, even in these cases they grow quickly in size and become even larger as the genus grows.  For example, in \cite{MEe09} the construction of the $\sigma$-expansion for the cyclic (4,5)-curve was described in detail.  Here the $\sigma$-function had weight 15 and the expansion was calculated up to $C_{59}$ which had 81,832 non-zero terms, (from 120,964 possible terms with the correct weight structure).

The number of $C_k$ required will depend on how the expansion is to be used.  If used to identify relations as discussed in Section \ref{SEC_DerRes}, then it will need to contain information on the monomials in curve coefficients that may be present in the identity in question.  This can be determined by the weight and type of the desired identity.

\subsection{Bases of Abelian functions} \label{SEC:Bases}

We can classify the Abelian functions according to their pole structure.  We denote by $\Gamma(m)$ the vector space of Abelian functions defined upon $J$ which have poles of order at most $m$, occurring only on the $\Theta$-divisor; the strata $\Theta^{[g-1]}$ where the $\theta$ and $\sigma$-functions have their zeros and the Abelian functions their poles.

A key problem is the generation of bases for these vector spaces.  Note that the dimension of the space $\Gamma(m)$ is $m^g$ by the Riemann-Roch theorem for Abelian varieties, (see for example \cite{la82}).  The first step in constructing a basis is to include the entries in the subsequent basis for $\Gamma(m-1)$. Subsequently only functions with poles of order exactly $m$ need to be sought.  The Kleinian $\wp$-functions are natural candidates and indeed are sufficient to solve the problem in the elliptic case.  Here $g=1$ and so only one new function is required at each stage, which can be filled by the repeated derivatives of the Weierstra{\ss} $\wp$-functions.

However, if $g>1$ then new classes of functions are required to complete the bases.  We discuss the genus two case as it illuminates the connections between the two approaches to $\wp$-functions discussed in this paper.  Table \ref{tab:g2Bases} shows how bases in this case may be constructed for arbitrary $m$.  As in the genus one case the basis for $\Gamma(2)$ may be formed using a constant and the 2-index $\wp$-functions.  However, when considering $\Gamma(3)$ we find that even after including the 3-index $\wp$-functions we still need an additional function.  The function
\[
\Xi=\wp_{11}\wp_{22} - \wp_{12}^2
\]
is usually taken to fill this hole.  The individual terms in $\Xi$ have poles of order 4, but when taken together they cancel to leave poles of order 3 as can easily be checked using Definition \ref{def:nip}.  We then proceed to consider $\Gamma(4)$ and find that two new functions are needed after the inclusion of the 4-index $\wp$-functions.  The two derivatives of $\Xi$ can play this role and we use the notation
\[
\partial_i \Xi = \frac{\partial}{\partial u_i}\Xi
\]
for simplicity, (and similarly for other functions).  Naturally the derivatives of the $\Xi$ have poles of order four but we must also ensure that they are linearly independent of the other functions.  This can be checked trivially by noting that they are of a different weight to both each other and all the other elements of the basis.

It is simple to check that when considering higher values of $m$ we find that the preceding basis and its unique derivatives always give the required number of functions for a new basis, and that they are all of unique weight and so linearly independent.  Hence the general basis is as described in Table \ref{tab:g2Bases}, where $\{ \cdot \}$ is indicating all functions of this form.

\begin{table}
\begin{center}
\begin{tabular}{l | c l}
\textbf{Space} \,& \,\textbf{Dimension} & \textbf{Basis} \\ \hline
$\Gamma(0)$      & 1          & $\{1\}$                                                                                 \\
$\Gamma(1)$      & 1          & $\{1\}$                                                                                 \\
$\Gamma(2)$      & 4          & $\{1,\wp_{11},\wp_{12},\wp_{22}\}$                                                     \\
$\Gamma(3)$      & 9          & $\{1,\wp_{11},\wp_{12},\wp_{22}, \wp_{111},\wp_{112}, \wp_{122}, \wp_{222}, \Xi \}$  \\
$\Gamma(4)$      & 16         & $\{1,\dots,\Xi,\wp_{1111},\dots,\wp_{2222}, \partial_1\Xi, \partial_2\Xi \}$   \\
$\vdots$         & $\vdots$   & \\
$\Gamma(m)$      & $m^2$      & $\{ \mbox{basis for } \Gamma(m-1)\}\cup\{\{\wp_{i_1 \dots i_m}\}, \{\partial_{i_1}\dots\partial_{i_{m-2}}\Xi\}\}$ \\
\end{tabular}
\end{center}
\caption{Table of bases for Abelian functions associated with a genus two curve.}
\label{tab:g2Bases}
\end{table}

The genus 1 and 2 cases, in which we get to a stage where new bases are calculated from the old ones and their derivatives, are special.  They fall into the class where the theta divisor is non-singular and the $\mathcal{D}$-module structure of such cases is discussed in \cite{cn08}.  However, subsequent $(n,s)$-curves are not covered by this case and so new methods must be used to derive bases here.  Two approaches have been developed to find additional functions.  (Note that in each case we still need to check the linear independence of the functions, which can be done using the $\sigma$-expansion.)

The first is to match poles in algebraic combinations of $\wp$-functions so that they cancel, analogous to $\Xi \in \Gamma(3)$.  For example, consider the following function, constructed of products of 2 and 3-index $\wp$-functions.
\begin{align*}
\mathcal{B}_{ijklm} &= \wp_{ij}\wp_{klm} + \textstyle \frac{1}{3}\big( \wp_{jk}\wp_{ilm} + \wp_{jl}\wp_{ikm} + \wp_{jm}\wp_{ikl}
- 2\wp_{kl}\wp_{ijm} - 2\wp_{km}\wp_{ijl} - 2\wp_{lm}\wp_{ijk} \big).
\end{align*}
Each term here has poles of order five, but overall both these and the poles of order four cancel (as can easily be checked using Definition \ref{def:nip}) and so the function belongs to $\Gamma(3)$.  This particular function can be used to construct classes of differential equations bilinear in the 2 and 3-index $\wp$-functions, discussed in more detail in \cite{ME11}.  This paper also constructs other classes of such combinations in a systematic way and uses them in an example case to find new addition formulae and differential equations.  (See also \cite{MEeo11} for more results that followed from this approach.)

The second approach is to define Abelian functions by the application of differential operators on $\sigma$-functions.  For example, define Hirota's bilinear operator as $\mathcal{D}_i = \partial / \partial u_i - \partial / \partial v_i$.  Then define the $m$-index $Q$-functions (for $m$ even) as
\begin{align*}
Q_{i_1, i_2,\dots,i_m}(\bm{u}) &=  \frac{(-1)}{2\sigma(\bm{u})^2} \mathcal{D}_{i_1}\mathcal{D}_{i_2}...\mathcal{D}_{i_m} \sigma(\bm{u}) \sigma(\bm{v})
\hspace*{0.08in} \Big|_{\bm{v}=\bm{u}}
\qquad \qquad i_1 \leq ... \leq i_n \in \{1,\dots,g\}.
\end{align*}
These are all Abelian functions with poles of order two and so can be used to solve the basis problem in general for $\Gamma(2)$.

Both these approaches construct new functions that have a given pole structure in general, that is without reference to a specific underlying curve.  However, there are some interesting examples of functions which have poles of lower order on certain curves.  Consider the following $\Delta$-function, originally introduced by Baker in \cite{ba03}.
\[
\Delta=\wp_{11}\wp_{33}-\wp_{12}\wp_{23}-\wp_{13}^2+\wp_{13}\wp_{22}.
\]
The terms in this expression have poles of order four and overall the expression will have poles of order three.  (This may be checked using Definition \ref{def:nip}).  However, in the case of the (2,7)-curve, these can be shown to cancel to leave poles of order two.  The (2,7)-case has $g=3$ and so $2^3=8$ functions are required to form a basis for $\Gamma(2)$.  Such a basis may be formed from a constant, the six 2-index $\wp$-functions and the $\Delta$-function.  Similar functions (linear combinations of quadratics in 2-index $\wp$-functions) were constructed in \cite{MEeo11} to complete the basis of $\Gamma(2)$ in the $(2,9)$-case and it appears that such functions are a feature of the hyperelliptic cases.  Their appearance is more natural when considering the equivariant approach as discussed in Section \ref{SEC_Equivar}.

\subsection{Deriving differential equations and addition formulae} \label{SEC_DerRes}

The $\sigma$-expansion and bases of functions discussed above may be used to derive various identities and properties of the functions.  For every $(n,s)$-curve there exist sets of differential equations that generalise the elliptic identities (\ref{eq:Intro_elliptic_diff1}) and (\ref{eq:Intro_elliptic_diff2}).  However, the exact form of these generalisations is not clear.  In the hyperelliptic cases they have an analogous form, so that the generalisation of
\[
\wp''(u) = 6\wp(u)^2 - \textstyle \frac{1}{2}g_2,
\]
is a set of equations expressing each 4-index $\wp$-function as a quadratic equation in 2-index $\wp$-functions.  Similarly the generalisation of
\[
\big(\wp'(u)\big)^2 = 4\wp(u)^3 - g_2\wp(u) - g_3,
\]
is a set of equations expressing products of 3-index $\wp$-function as a cubic equation in 2-index $\wp$-functions.  However, in non-hyperelliptic cases it is not possible to find complete sets, or more accurately, we need to use all the functions in the basis of $\Gamma(2)$ to construct the equations, rather than just the 2-index $\wp$-functions.  These and other sets of differential equations between the $\wp$-functions are discussed in detail in \cite{ME11}.  It is expected that the precise structure of these equations will become apparent following development of the equivariant approach.

In low genus cases such differential equations follow from an algebraic definition of the $\wp$-functions as described in \cite{bel97}.  However, for higher genus cases the $\sigma$-expansion is necessary to find an exhaustive set.  Individual relations can be calculated by forming the relations with unidentified coefficients, based on the weight and structure of the relations.  The coefficients may then be identified by substituting in the $\sigma$-expansion, setting the coefficients to zero and solving the overdetermined resulting linear system.  The procedure to do this is discussed in more detail in \cite{MEe09} and \cite{MEhgt10}.

A similar use of the $\sigma$-expansion is in the construction of addition formulae.  The Weierstra{\ss} formula in equation (\ref{eq:Intro_elliptic_add}) has been generalised in many cases and papers.  The simplest generalisation is to a formula of the form
\[
\frac{\sigma(\bm{u} + \bm{v})\sigma(\bm{u} - \bm{v})}{\sigma(\bm{u})^2\sigma(\bm{v})^2} = \sum_i c_i A_i(\bm{u})B_i(\bm{v})
\]
where $A_i, B_i \in \Gamma(2)$ and the $c_i$ are constants.

This structure follows from linear algebra after checking that the left hand side is both Abelian in $\bu$ and $\bv$.  The coefficients $c_i$ can be identified by forming the right hand side from elements of $\Gamma(2)$ and arbitrary coefficients and then substituting in the $\sigma$-expansion.  There are many simplifications to be made to this calculation by considering the weight, parity and symmetry properties of the left hand side and constructing the right hand side accordingly.  Further computational simplifications can be made by expanding the resulting products of series in a way that takes into account the homogeneity of the weight structure.

There are a second class of addition formulae present only when the underlying $(n,s)$-curve is cyclic,
\[
y^{n} = x^{s} + \lambda_{s-1}x^{s-1} + ... + \lambda_1x + \lambda_0.
\]
It is associated with their extra family of automorphisms:
\[
[\zeta^j]: (x,y) \rightarrow (x, \zeta^j y), \qquad \mbox{where } \zeta^n = 1,
\]
with the following function becoming Abelian in the $n$ variables $\bm{u}^{[i]}, i=1 \dots n$.
\[
\prod_{j=1}^{n} \frac{\sigma\left( \sum_{i=1}^n [\zeta^{i+j}]\bm{u}^{[i]} \right) }{\sigma( (\bm{u}^{[j]})^n ) }
\]
It can hence be written as a sum of products of $n$ functions from $\Gamma(n)$ in the variables, $\bm{u}^{[i]}, i=1 \dots n$ respectively.

Such formulae can be computationally difficult to compute.  Simplified versions may be found, setting one or more of the variables $\bm{u}^{[i]}$ to zero.
We can also consider reduced curves which have further automorphisms and hence extra addition formulae.  For example if we set all the curve coefficients except $\lambda_{0}$ to zero then there will be automorphisms
\[
[\eta^j]: (x,y) \mapsto (\eta^j x,y), \qquad \mbox{where } \eta^s = 1.
\]

The first example of these automorphism addition formulae was given in \cite{eemop07} with further examples recently published in \cite{MEeo11} and \cite{ME11}.

\section{The equivariant approach} \label{SEC_Equivar}

\subsection{Equivariant curves and functions }

The motivations behind the equivariant theory are to shortcut some
of the calculational intricacies of the $\sigma$-expansion and to
keep a firm grip of the underlying geometry of the curve.
This approach was the topic of \cite{CA08} where it was developed for an equivariant hyperelliptic curve of genus $g$, which is given by
\[
y^2 = \sum_{i=1}^{2g+2} \binom{2g+2}{i} \lambda_ix^i.
\]
We start by choosing simple $PSL(2,\C)$ coordinate transformations on the curve coordinates
\[
x\rightarrow\tilde x = \frac{\alpha x+\beta}{\gamma x+\delta}, \qquad y\rightarrow\tilde y = \frac{y}{(\gamma x+\delta)^p}.
\]
Then $H^1_0$, (as defined in Section \ref{SubSec_RR}) is a $g$-dimensional $SL(2,\C)$ module which may be reducible.
It should be noted that this choice of transformation does not exhaust all possible and possibly useful transformations.
They provide a convenient and manageable starting point.

We define functions $\wp_{ij}$ associated with a hyperelliptic curve of genus $g$ algebraically as those satisfying
\[
\frac{1}{\delta^{g-1}}\sum_{i,j=1}^{g}x^{i-1}x_k^{j-i}\wp_{ij}=\frac{yy_k-F(x,x_k)}{\delta^{g+1}}
\]
Here the quantity $I=yy_k-F(x,x_k)$ is an `equivariant polar form' of the curve.  For example, if the genus is three then
\begin{align*}
F(x,x_k) &= \lambda_8(xx_k)^4 + 4\lambda_7(xx_k)^3(x+x_k) + \lambda_6(xx_k)^2(3x^2+8xx_k+3x_k^2)
+ 4\lambda_5(xx_k)(x^3+6x^2x_k+6xx_k^2+x_k^3) \\
&\qquad + 4\lambda_4(x^4+16x^3x_k+36x^2x_k^2+16xx_k^3+x_k^4)
+ 4\lambda_3(x^3+6x^2x_k+6xx_k^2+x_k^3) \\
&\qquad + \lambda_2(3x^2+8xx_k+3x_k^2) + 4\lambda_1(x+x_k) + \lambda_0.
\end{align*}
This definition guarantees the equivariance of the algebraic definition at the expense of some small changes (addition of constants) to the classical definition of the $\wp_{ij}.$

We denote the standard, finite dimensional irreducible $SL(2,\C)$-module of dimension $d$ as $\mathbf d$, and use similar notation for other modules.
In the hyperelliptic $(2,2g+1)$ cases (of genus $g$), $H^1_0$ is a $\mathbf g$ and the $\wp_{ij}$ decompose as
\[
{\mathbf g}\odot{\mathbf g} = (\mathbf{2g-1})\oplus(\mathbf{2g-5})\oplus\ldots
\]
For the trigonal $(3,3p+1)$ cases (of genus $3p$), $H^1_0$ is a ${\mathbf p}\oplus{\mathbf{2p}}$ and the $\wp_{ij}$ decompose as
\begin{align*}
{\mathbf p}\odot{\mathbf p} &= (\mathbf {2p-1})\oplus(\mathbf {2p-5})\oplus\ldots
\\
{\mathbf p}\otimes\mathbf{2p} &= ( \mathbf {3p-1} )\oplus(\mathbf {3p-3})\oplus(\mathbf {3p-5})\ldots
\\
\mathbf {2p}\odot\mathbf {2p} &= (\mathbf {4p-1})\oplus(\mathbf {4p-5})\oplus\ldots
\end{align*}

\subsection{ Identities} \label{SEC:Equi_Ident}

Products of $\wp$-functions are to be thought of as symmetric tensor
products of finite dimensional modules. Then any identity must
belong to some finite dimensional, irreducible module of identities
classified by some highest weight element. The strategy then is to:
\begin{itemize}
\item start from the Kleinian, algebraic definition of $\wp_{ij};$
\item undertake a (classical) singularity expansion, analogous to the classical elliptic methods, to obtain some highest \emph{weight} identity;
\item generate the module of identities associated with this highest weight identity;
\item execute a further singularity expansion modulo this module to obtain further identities.
\end{itemize}

The tools are provided by the representation theory: the use of the
Lie algebra $\mathfrak{sl}(2,\C)$ to locate highest weight elements
and the \emph{Casimir} element of the enveloping algebra to identify
dimensions.

\subsubsection*{Example: Explicit Analogue to the Weierstra{\ss} cubic in the genus three hyperelliptic case}

Define the following matrices:
\[
P = \left[
\begin{array}{ccccc}
0 & 0 & \wp_{11} & \wp_{12} & \wp_{13} \\
0 & -2\wp_{11} & -\wp_{12} & \wp_{22}-2\wp_{13} & \wp_{23}\\
\wp_{11} & -\wp_{12} & 2\wp_{13}-2\wp_{22} & -\wp_{23} & \wp_{33}\\
\wp_{12} & \wp_{22}-2\wp_{13} & -\wp_{23} & -2\wp_{33} & 0\\
\wp_{13} & \wp_{23} & \wp_{33} & 0 & 0
\end{array}
\right],
\qquad \qquad
H=\left[
\begin{array}{ccccc}
\lambda_0  & 4\lambda_1  & 6\lambda_2  & 4\lambda_3  & \lambda_4   \\
4\lambda_1 & 16\lambda_2 & 24\lambda_3 & 16\lambda_4 & 4\lambda_5  \\
6\lambda_2 & 24\lambda_3 & 36\lambda_4 & 24\lambda_5 & 6\lambda_6  \\
4\lambda_3 & 16\lambda_4 & 24\lambda_5 & 16\lambda_6 & 4\lambda_7  \\
\lambda_4  & 4\lambda_5  & 6\lambda_6  & 4\lambda_7  & \lambda_8
\end{array}
\right]
\]
\[
A=\left[
\begin{array}{ccccc}
0 & -\wp_{333} & \wp_{233} & -\wp_{223}+\wp_{133} & \wp_{222}-2\wp_{123} \\
\wp_{333} & 0 & -\wp_{133} & \wp_{123} & -\wp_{122}+\wp_{113}\\
-\wp_{233} & \wp_{133} & 0 & -\wp_{113} & \wp_{112}\\
\wp_{223}-\wp_{133} & -\wp_{123} & \wp_{113} & 0 & -\wp_{111}\\
-\wp_{222}+2\wp_{123} & \wp_{122}-\wp_{113} & -\wp_{112} & \wp_{111} & 0
\end{array}
\right].
\]
Then all identities of degree two in 3-index $\wp$-functions and degree three in 2-index $\wp$-functions are contained within,
\begin{equation}\label{eq:g3equi_QR}
({\bf l}^TA{\bf k})({\bf l}'^TA{\bf k}')+\frac14\det \left[
\begin{array}{ccc}
H-2P & {\bf l}^T & {\bf k}^T\\
{\bf l'} & 0 & 0\\
{\bf k'} & 0 & 0
\end{array}\right]=0
\end{equation}
where the $\bm{l},\bm{k},\bm{l'},\bm{k'}$ are vectors of dimension 5 with arbitrary parameters, \cite{CA11}. \\

For comparison, the genus one equation is written below in a corresponding fashion.  (Note that here $\wp_{111}\equiv \wp'$).
\begin{equation}
\wp_{111}^2+\frac14\det\left[
\begin{array}{ccc}
\lambda_0   & 2\lambda_1           & \lambda_2-2\wp_{11}\\
2\lambda_1  & 4\lambda_2+4\wp_{11} & 2\lambda_3\\
\lambda_2-2\wp_{11} & 2\lambda_3   & \lambda_4
\end{array}
\right]=0. \nonumber
\end{equation}

Regarding the genus $3$ case we make some remarks.
Firstly, the matrix $A$ is of rank two. As a $5\times 5$ anti-symmetric matrix this implies `Pl{\"u}cker-like' relations on the $\wp_{ijk}$.
This correspondence between the entries of $A$ and Pl{\"u}cker coordinates on the Grassmanian G(2,5) can be made precise and as an expression of Koszul duality which, in turn, suggests a natural conjecture for higher genus identities in the hyperelliptic cases.

Secondly, equivariant non-hyperelliptic cases are still under
construction. The known formulae are certainly equivariant at
leading order in the $\wp_{i_1i_2 \ldots i_n}$ but not being derived from
fully equivariant models of $X$ the devil is in the lower order
detail.

\section{Combining the approaches} \label{SEC_Future}

We aim to fuse the two approaches. We already know how to map from known hyperelliptic equivariant results to the standard results for $(n,s)$-curves.  For example, consider the equivariant theory for the hyperelliptic genus 3 curve discussed in Section \ref{SEC:Equi_Ident}.  The set of identities encoded in the matrix equation (\ref{eq:g3equi_QR}) is equivalent to the set derived and presented for the cyclic (2,7)-curve in Theorem 4.1 of \cite{MEeo11}.  To map from the former to the latter we need to shift the 2-index $\wp$-functions as follows:
\begin{align*}
&\wp_{1,1} \mapsto \wp_{1,1}+3\lambda_2, \quad
\wp_{1,2} \mapsto \wp_{1,2}+2\lambda_3,  \quad
\wp_{1,3} \mapsto \wp_{1,3}+\tfrac{1}{2}\lambda_4,  \quad \\
&\wp_{2,2} \mapsto \wp_{2,2}+9\lambda_4,  \quad
\wp_{2,3} \mapsto \wp_{2,3}+2\lambda_5,  \quad
\wp_{3,3} \mapsto \wp_{3,3}+3\lambda_6. \quad
\end{align*}
Then we need to change the scaling of the curve coefficients:
\begin{align*}
\begin{array}{lllll}
  \lambda_0 \mapsto 4\lambda_0, \quad
& \lambda_1 \mapsto \frac{1}{2}\lambda_1, \quad
& \lambda_2 \mapsto \frac{1}{7}\lambda_2, \quad
& \lambda_3 \mapsto \frac{1}{14}\lambda_3, \quad
& \lambda_4 \mapsto \frac{2}{35}\lambda_4, \quad \\
\lambda_5 \mapsto \frac{1}{14}\lambda_5, \quad
& \lambda_6 \mapsto \frac{1}{7}\lambda_6, \quad
& \lambda_7 \mapsto \frac{1}{2}, \quad
& \lambda_8 \mapsto 0.
&
\end{array}
\end{align*}
Since the 2-index $\wp$-function only shift by a constant we find that the higher index $\wp$-functions are identical in both cases.

The equivariant technology can help make sense of the structure of
the pole bases of the functions discussed in Section
\ref{SEC:Bases}.  Consider again Table \ref{tab:g2Bases} which
summarised the bases in the genus two case.  The starting point was
the three 2-index $\wp$-functions, which formed a 3-dimensional
representation in the equivariant theory.
\[
\bm{3} = \{\wp_{11}, \wp_{12}, \wp_{22} \}.
\]
If we consider tensoring this with itself then we find
\[
\bm{3} \odot  \bm{3} = \bm{5} \oplus \bm{1},
\]
so we have a 5-dimensional representation and a 1-dimensional
representation of functions quadratic in the 2-index
$\wp$-functions.  We have
\begin{align*}
\bm{1} &= \Xi \\
&= \wp_{11}\wp_{22} - \wp_{12}^2 \in \Gamma(3).
\end{align*}
The other representation is
\[
\bm{5} = \{\wp_{11}^2,4\wp_{11}\wp_{12},4\wp_{12}^2+2\wp_{11}\wp_{22},4\wp_{12}\wp_{22},\wp_{22}^2\}
\]
which are all linearly independent functions with poles of order 4 and so can replace the 4-index $\wp$-functions in $\Gamma(4)$.  So tensoring up the  representations allows us to move through the bases and see how they break up into representations themselves, often in ways not apparent from the notation used.

We can also gain insight and move between the representations by considering the representation of operators
\[
\bm{\partial} = \{ \partial_1, \partial_2 \}.
\]
Since this is two dimensional we will have
\[
\bm{\partial} \otimes \bm{n} = \bm{n-1} \oplus \bm{n+1}
\]
when it is tensored with any representation of functions.  For example, tensoring it with the $\bm{3}$ from above gives
\[
\bm{\partial} \odot \bm{3} = \bm{2} \oplus \bm{4}
\]
where $\bm{4}$ is the representation of 3-index $\wp$-functions and $\bm{2}$ is empty due to the integrability properties:
\[
\partial_2\wp_{11}-\partial_1\wp_{12}=0 \qquad \mbox{and} \qquad \partial_1\wp_{22}-\partial_2\wp_{12}=0.
\]
These kinds of observation will lead us to update Table \ref{tab:g2Bases} to give Table \ref{tab:g2EquiBases}.  We have used the notation $\pi_n$ to denote projection onto an $n$-dimensional irreducible submodule.

\begin{table}
\begin{center}
\begin{tabular}{l | c l}
\textbf{Space} \,& \,\textbf{Dimension} & \textbf{Basis} \\ \hline
$\Gamma(0)$      & 1          & $\{1\}$                                                                                 \\
$\Gamma(1)$      & 1          & $\{1\}$                                                                                 \\
$\Gamma(2)$      & 4          & $\{1,{\bf 3\}}$                                                     \\
$\Gamma(3)$      & 9          & $\{1,{\bf 3}, \partial{\bf 3}, \pi_1[{\bf 3}\odot{\bf 3}]=\Xi \}$  \\
$\Gamma(4)$      & 16         & $\{1,\dots,\Xi,\partial^2{\bf 3}, \partial\Xi\}$   \\
$\vdots$         & $\vdots$   & \\
$\Gamma(m)$      & $m^2$      & $\{ \mbox{basis for } \Gamma(m-1)\}\cup\{\partial^{m+1}{\bf 3},\partial^{m-2}\Xi\}$ \\
\end{tabular}
\end{center}
\caption{Table of bases for Abelian functions associated with a
genus two curve.} \label{tab:g2EquiBases}
\end{table}

Now consider the algebraic Jacobian in genus two.
This is a locus of quadratic identities in ${\mathbb P}^{15}.$  The curve is, as before,
\[
y^2 = \lambda_6x^6 + 6\lambda_5x^5 + 15\lambda_4x^4 + 20\lambda_3x^3 + 15\lambda_2x^2 + 6\lambda_1x + \lambda_0
\]
Equivariant inhomogeneous coordinates on ${\mathbb P}^{15}$ are symmetric functions in a general divisor $((x_1,y_1),(x_2,y_2))$ of degree two on the curve having poles of order two or more on the divisor.  There are fifteen coordinates which fall into irreducible
$SL_2(\mathbb C)$ modules of dimensions one to five:
\[
{\bf 15}={\bf 5}_2\oplus{\bf 4}_3\oplus{\bf 3}_4\oplus{\bf 2}_5\oplus{\bf 1}_6
\]
It is a crucial observation that the divisor pole grading (denoted by the subscripts) is simply related to the module dimension,  i.e. that the dimension plus the degree at infinity is always equal to $7$.

The explicit coordinates on Jac($X$) are:
\begin{eqnarray}
{\bf 5}_2
&=&\left(\frac{1}{\delta^2},\frac{2(x_1+x_2)}{\delta^2},\frac{x_1^2+4x_1x_2+x_2^2}{\delta^2},\frac{2x_1x_2(x_1+x_2)}{\delta^2},\frac{x_1^2x_2^2}{\delta^2}\right),
\nonumber\\
{\bf 4}_3
&=&\left(\frac{y_1-y_2}{\delta^3},\frac{x_2y_1-x_1y_2}{\delta^3},\frac{x_2^2y_1-x_1^2y_2}{\delta^3},\frac{x_2^3y_1-x_1^3y_2}{\delta^3}\right),
\nonumber\\
{\bf 3}_4
&=&\left(\frac{I}{\delta},\frac{I(x_1+x_2)}{\delta},\frac{Ix_1x_2}{\delta}\right),
\nonumber\\
{\bf 2}_5
&=&\left(\frac{y_1I,_{x_1}+y_2I,_{x_2}}{\delta},\frac{x_2y_1I,_{x_1}+x_1y_2I_{x_2}}{\delta}\right),
\nonumber\\
{\bf 1}_6
&=&I^2,
\nonumber
\end{eqnarray}
where $\delta=x_1-x_2$ and $\delta^3 I$ is equal to the equivariant polar form of the curve, discussed earlier.
The vanishing of this polar form defines a cubic curve with order three contact at a point on the sextic curve above.

These coordinates are equivariant modifications of coordinates introduced in \cite{CasselsFlynn} and the Jacobian is the locus of a set of 72
identities quadratic therein. We seek such identities by tensoring
up the coordinate modules, decomposing into irreducibles,
identifying the divisor pole grading and then balancing both
singularities and dimension.
For example the projection
\[
\pi_5[ {\bf 3}_4 \odot {\bf 3}_4 - {\bf 5}_2 \odot {\bf 1}_6 ]
\]
onto a 5-dimensional component is, as is easily verified, vanishing.
This represents 5 quadratic identities. The other 67 can be equally
straightforwardly described and classified, \cite{CA_CF}.

The decompositions of tensor products of standard irreducible
modules ${\bf d}_i$ of dimension $d$ and pole order $i$ on the
divisor satisfy relations of the following type
\[
{\bf n}_s \odot {\bf m}_r \in \bigoplus_{p=|n-m|+1}^{|n-m|-1} {\bf p}_{\rho(n,s;m,r,p)}.
\]
The function $\rho(n,s;m,r,p)$ is the pole order of the irreducible component of dimension $p$ and it is an interesting problem to understand its structure.

These ideas also apply to tensoring of $\wp_{i_1i_2 \ldots i_n},$ which is why we could rewrite all of Table \ref{tab:g2Bases} in terms of modules in Table \ref{tab:g2EquiBases}.
By a simple counting argument one sees that
\begin{align*}
\dim\Gamma(m+1)-\dim\Gamma(m)&=\dim V(m+2)+\dim V(m-1), \\
\mbox{i.e.} \qquad (m+1)^2 - m^2 &= (m+2) + (m-1).
\end{align*}
The Hirota derivative, $\mathcal D,$ has been introduced earlier as
\[
{\mathcal D}(f,g)\rightarrow f'g-fg',
\]
and it can be characterised in terms of $SL_2$-modules  by the following equivariance property:
\[
\begin{array}{ccccc}
V\otimes V & \stackrel{\mathcal D}{\longrightarrow} & V\otimes V & \stackrel{Symm}{\longrightarrow} & V\nonumber\\
\downarrow & & \downarrow & & \downarrow\nonumber\\
V\otimes V & \longrightarrow & V\otimes V & \longrightarrow & V\nonumber
\end{array}
\]
In this (commuting) diagram the downward vertical arrow signifies
the $SL_2$-action. Extensions of this equivariant Hirota map to
higher order tensor products exist and our current work seeks to
understand basis generation (in ``appropriate'' gradings) by
studying equivariant resolutions of $SL_2$-modules of functions on
the Jacobian.

\section*{Acknowledgements}

The authors would like to thank the organizers of the conference, \emph{Finite Dimensional Integrable Systems in Geometry and Mathematical Physics 2011}, for the invitation to speak and financial assistance to attend this meeting.  The authors gave a joint talk at this meeting, from which this paper originated.

\section*{Contact Details}

\begin{tabular}{lcl}
Dr. Matthew England            & \hspace*{0.5in} & Dr. Chris Athorne           \\
Matthew.England@glasgow.ac.uk  &                 & Chris.Athorne@glasgow.ac.uk \\
\end{tabular}

\bigskip

\noindent \hspace*{0.05in} School of Mathematics \& Statistics, University of Glasgow, Glasgow, G12 8QW, UK.

\newpage

\begin{small}
\bibliography{GEF}{}
\bibliographystyle{plain}
\end{small}

\end{document}